\theoremstyle{definition}
\newtheorem{theorem}{Theorem}
\newtheorem{lemma}[theorem]{Lemma}
\newtheorem{corollary}[theorem]{Corollary}
\newtheorem{property}[theorem]{Property}
\theoremstyle{remark}
\begin{document}
%
\title{The Least Degraded and the Least Upgraded Channel with respect to a Channel Family}
\author{\IEEEauthorblockN{Wei Liu, S.~Hamed Hassani, and R\"{u}diger Urbanke}
\IEEEauthorblockA{School of Computer and Communication Sciences\\
EPFL, Switzerland\\
Email: \{wei.liu, seyedhamed.hassani, rudiger.urbanke\}@\;epfl.ch}}

\maketitle

\begin{abstract}
Given a {\em family} of binary-input memoryless output-symmetric
(BMS) channels having a {\em fixed} capacity, we derive the BMS
channel having the highest (resp. lowest) capacity among all channels
that are degraded (resp. upgraded) with respect to the whole family.
We give an explicit characterization of this channel as well as an
explicit formula for the capacity of this channel.
\end{abstract}

\IEEEpeerreviewmaketitle

\section{Introduction}
Channel ordering plays an important role in analyzing the asymptotic
behavior of iterative coding systems over binary-input memoryless
output-symmetric (BMS) channels. We say that two BMS
channels are {\em ordered} by a real-valued parameter, if the
``worse'' channel has a larger parameter than the ``better'' channel.
E.g., we might consider as a real-valued parameter the bit-error
probability, the capacity, or perhaps the Battacharyya parameter.

One particularly useful way of imposing an order is to consider the
notion of {\em degradation} \cite{ru08}. The main reason that this order is
so useful is that many important functionals and kernels (defined
later) associated with BMS channels either preserve or reserve the
ordering induced by degradation. In addition, the order of degradation
is preserved under many natural operations one might perform on channels.

Instead of considering a single BMS channel, we are interested in
a family of BMS channels sharing certain properties. The
cardinality of the family can be either finite or infinite.
A particularly important example for our purpose is the family of
BMS channels having fixed capacity $c$. This family is denoted by
$\{\textrm{BMS}(c)\}$. The question we address is the following.
Given the family of all BMS channels of capacity $c$, can we find
the ``best'' degraded channel and the ``worst'' upgraded channel
with respect to this family. I.e., can we find the channel that
is degraded with respect to all members in the family and has the highest
capacity, as well as the channel that is upgraded with respect to
all members of the family and has the lowest capacity.

Determining these channels is both natural and fundamental. To
mention just one possible application, assume that we want to
construct a polar code \cite{arikan09} which works for all channels of a particular
capacity. If we can find a channel which is degraded with respect
to all elements in the family then a polar code designed for this
channel will work a fortiori for all elements of the original family.
We then say that this polar code is {\em universal}. This simple
bound is somewhat too crude to construct a good universal polar
code. But we can apply the same argument not to the original family
but after a few steps of the polarization process. The more steps
we perform the better the overall code will be (i.e., it will have
higher capacity) and the closer we will get to an optimal construction.
In what follows we will not be concerned with applications but we
will only be interested in describing these two extreme channels
given a channel family. As we will see, we will be able to give an
explicit answer to the question.

\subsection{Channel Model}
A binary-input memoryless output-symmetric (BMS) channel has an
input $X$ taking values over an input alphabet $\mathcal{X}$, an
output $Y$ taking values over an output alphabet $\mathcal{Y}$,
and a transition probability $p_{Y|X}(y|x)$ for each $x \in
\mathcal{X}$ and $y \in \mathcal{Y}$. Throughout the paper we take the input
alphabet $\mathcal{X}$ to be $\{\pm 1\}$ and the output alphabet
$\mathcal{Y}$ to be a subset of $\bar{R}\triangleq [-\infty,+\infty]$.

Given a BMS channel $a$, an alternative description of the channel
is the $L$-distribution of the random variable
\begin{equation*}
L = l(Y) \triangleq \ln \frac{p_{Y|X}(Y|1)}{p_{Y|X}(Y|-1)}.
\end{equation*}
A closely related random variable is
\begin{equation*}
D = d(Y) \triangleq \tanh\left(\frac{l(Y)}{2}\right) = \frac{1-\textsf{e}^{-l(Y)}}{1+\textsf{e}^{-l(Y)}},
\end{equation*}
whose distribution is called a $D$-distribution, conditioned on
$X=1$. Moreover, denote by $|D|$ the absolute value of the random
variable $D$. Then, the distribution of $|D|$ is termed a
$|D|$-distribution and the associated density is called a $|D|$-density.
Given a BMS channel $a$, we denote by $|\mathfrak{A}|$and
$|\mathfrak{a}|$ the associated $|D|$-distribution and $|D|$-density,
respectively. The $|D|$-density of a {\em discrete} BMS channels
is of the form
\begin{equation}\label{equ-abs-d-density}
|\mathfrak{a}|(x) = \sum_{i=1}^n \alpha_i \delta(x - x_i),
\end{equation}
where $0 \leq \alpha_i \leq 1$, $\sum_{i=1}^n \alpha_i = 1$,
and $0 \leq x_1 < x_2 < \cdot\cdot\cdot < x_n \leq 1$;
while $\delta(\cdot)$ is the (Dirac) delta function and $n \in
\mathbb{N}_{+}$ is finite or countably infinite.

\subsection{Functionals of Densities}
The {\em capacity} of a BMS channel can be defined as a linear
functional of its $|D|$-density \cite{ru08}. Formally, the capacity
$\mathrm{C}(|\mathfrak{a}|)$ in bits per channel use of a BMS
channel with $D$-density $|\mathfrak{a}|$ is defined as
\begin{equation*}
\mathrm{C}(|\mathfrak{a}|) \triangleq \int_0^1 |a|(x) \left(1 - \textsf{h}(x)\right) \mathrm{d}x,
\end{equation*}
where $\textsf{h}(x) \triangleq h_2((1-x)/2)$, for $x \in [0,1]$,
and $h_2(x) \triangleq -x\log_2(x) - (1-x)\log_2(1-x)$ is the {\em binary entropy function}.
The {\em entropy functional} is defined as $\mathrm{H}(|\mathfrak{a}|) \triangleq 1 -
\mathrm{C}(|\mathfrak{a}|)$, and the entropy of a discrete
BMS channel is $\sum_{i=1}^n \alpha_i \textsf{h}(x_i)$.

Another important functional is the {\em Battacharyya parameter}
associated with the $|D|$-density $|\mathfrak{a}|$. It is defined as
\begin{equation*}
\mathfrak{B}(|\mathfrak{a}|) \triangleq \int_0^1 |\mathfrak{a}|(x)\sqrt{1-x^2}\mathrm{d}x.
\end{equation*}
If a BMS channel has a $|D|$-density of the form in
(\ref{equ-abs-d-density}), the Battacharyya parameter is given by
$\sum_{i=1}^n\alpha_i\sqrt{1-x_i ^2}$.

\subsection{Degradedness}
Denote by $p_{Y|X}(y|x)$ and $p_{Z|X}(z|x)$ the transition probabilities
of two BMS channels $a$ and $a'$, respectively. Let $\mathcal{X}$ be
the common input alphabet, and denote by $\mathcal{Y}$ and $\mathcal{Z}$
the output alphabets, respectively. We say that $a'$ is
{\em (stochastically) degraded} with respect to $a$ if there
exists a memoryless channel with transition probability $p_{Z|Y}(z|y)$
such that
\begin{equation*}
p_{Z|X}(z|x) = \sum_{y \in \mathcal{Y}} p_{Z|Y}(z|y) p_{Y|X}(y|x),
\end{equation*}
for all $x \in \mathcal{X}$ and all $z \in \mathcal{Z}$. Conversely,
the channel $a$ is said to be {\em (stochastically) upgraded}
with respect to the channel $a'$. The following lemma gives an
equivalent characterization of degradedness and upgradedness
\cite{ru08}.
\begin{lemma}\label{lemma-degrad}
Consider two BMS channels $a$ and $a'$ with the corresponding $|D|$-distributions $|\mathfrak{A}|$ and $|\mathfrak{A}'|$, respectively. The following statements are equivalent:
\begin{enumerate}[(i)]
\item $a'$ is degraded with respect to $a$;
\item $\int_0^1 f(x)\mathrm{d}|\mathfrak{A}|(x) \leq \int_0^1 f(x) \mathrm{d}|\mathfrak{A}'|(x)$, for all $f$ that are non-increasing and convex-$\cap$ on $[0,1]$;
\item $\int_z^1 |\mathfrak{A}|(x) \mathrm{d}x \leq \int_z^1 |\mathfrak{A}'|(x) \mathrm{d}x$, for all $z \in [0,1]$.
\end{enumerate}
\end{lemma}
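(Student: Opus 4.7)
The plan is to establish the equivalences cyclically, $(i)\Rightarrow(ii)\Rightarrow(iii)\Rightarrow(i)$. The first two implications are short---Jensen on the BMS-specific identity $D(Y)=2\Pr(X{=}1\mid Y)-1$ for the first, and a well-chosen family of test functions for the second. The return $(iii)\Rightarrow(i)$, which converts an abstract tail-integral inequality into an explicit physical degradation kernel, is the main obstacle.

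For $(i)\Rightarrow(ii)$, I would use the identity $D(Y)=2\Pr(X{=}1\mid Y)-1$, valid for any BMS channel with uniform input; by output symmetry, the distribution of $|D(Y)|$ is the same conditionally on $X{=}1$ as unconditionally, so there is no mismatch with the definition of $|\mathfrak{A}|$ in the excerpt. If $X\to Y\to Z$ realises the degradation in (i), then the tower property applied to the Markov chain gives $D(Z)=\mathrm{E}[D(Y)\mid Z]$. For any non-increasing concave $f$ on $[0,1]$, I would then chain
\begin{equation*}
|D(Z)|=|\mathrm{E}[D(Y)\mid Z]|\;\le\;\mathrm{E}[|D(Y)|\mid Z]
\end{equation*}
(triangle inequality), $f(|D(Z)|)\ge f(\mathrm{E}[|D(Y)|\mid Z])$ (monotonicity), and $f(\mathrm{E}[|D(Y)|\mid Z])\ge \mathrm{E}[f(|D(Y)|)\mid Z]$ (Jensen for concave $f$). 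Taking expectations yields $\mathrm{E}[f(|D|)]\le \mathrm{E}[f(|D'|)]$, i.e.\ (ii).

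For $(ii)\Rightarrow(iii)$, I would specialise (ii) to the one-parameter family $f_z(x):=-(x-z)^+$, $z\in[0,1]$, each of which is non-increasing and concave on $[0,1]$. Using $\mathrm{E}[(|D|-z)^+]=\int_z^1(1-|\mathfrak{A}|(x))\,\mathrm{d}x$, the inequality $\mathrm{E}[(|D|-z)^+]\ge \mathrm{E}[(|D'|-z)^+]$ rearranges to exactly (iii).

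For $(iii)\Rightarrow(i)$---the hard part---I would invoke the BSC decomposition: every BMS channel $a$ is realisable as a random choice, with distribution $|\mathfrak{A}|$, of a $\mathrm{BSC}((1-x)/2)$; similarly for $a'$. Since $\mathrm{BSC}((1-x)/2)$ can be further degraded, via cascades of independent flips and erasures, to any mixture of BSCs whose $|D|$-values lie in $[0,x]$, it suffices to exhibit a Markov kernel $K(x,\cdot)$, supported on $[0,x]$, whose barycentre against $|\mathfrak{A}|$ equals $|\mathfrak{A}'|$; such a $K$ then glues the per-component BSC degradations into a single memoryless kernel from the output of $a$ to the output of $a'$. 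The existence of $K$ under the tail inequality (iii) is a Strassen-type coupling statement for the increasing-convex order---condition (iii) is precisely $|D'|\preceq_{\mathrm{icx}}|D|$---and the technically delicate point is to produce a coupling compatible with the required support constraint $K(x,\cdot)\subseteq[0,x]$ rather than a generic convex-order coupling, so that the resulting kernel really is a physical degradation.
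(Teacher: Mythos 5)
The paper does not actually prove this lemma; it is quoted from the reference \cite{ru08}, so there is no internal proof to compare against. Judged on its own merits, your implications $(i)\Rightarrow(ii)$ and $(ii)\Rightarrow(iii)$ are correct and essentially the standard arguments: the identity $D(Z)=\mathrm{E}[D(Y)\mid Z]$ plus Jensen handles the first, and the test functions $f_z(x)=-(x-z)^+$ together with $\mathrm{E}[(|D|-z)^+]=\int_z^1(1-|\mathfrak{A}|(x))\,\mathrm{d}x$ handle the second.

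The step $(iii)\Rightarrow(i)$, however, contains a genuine gap that is not merely ``technically delicate'' but fatal to the approach as formulated. You reduce the problem to finding a Markov kernel $K(x,\cdot)$ supported on $[0,x]$ whose barycentre against $|\mathfrak{A}|$ is $|\mathfrak{A}'|$. Such a kernel is indeed sufficient, but its existence does \emph{not} follow from (iii), and in fact it can fail to exist even when $a'$ is genuinely degraded with respect to $a$. Take $a=\textrm{BEC}(\epsilon)$, whose $|D|$-distribution puts mass $\epsilon$ at $0$ and mass $1-\epsilon$ at $1$, and $a'$ the BSC with $|D|$-value $x_0=1-\epsilon$; mapping the erasure symbol to a uniform random bit shows $a'$ is degraded with respect to $a$, and one checks (iii) holds. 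Yet any kernel $K$ with $K(0,\cdot)$ supported on $\{0\}$ must leave mass at least $\epsilon$ at the point $0$, whereas $|\mathfrak{A}'|=\delta_{x_0}$ has none, so no admissible $K$ exists. The reason is that a physical degrading map may \emph{merge} output symbols coming from different BSC components of $a$, producing output symbols whose $|D|$-value exceeds that of some component they absorb; your per-component picture, in which each $\textrm{BSC}(x)$ is degraded separately and the results are mixed, captures only the kernels that respect the component structure, and condition (iii) is precisely the increasing-convex order, which is strictly weaker than the first-order stochastic dominance your support constraint would require. A correct proof of $(iii)\Rightarrow(i)$ (equivalently $(ii)\Rightarrow(i)$) has to work at the level of the output alphabets, e.g.\ via the Blackwell--Sherman--Stein theorem for dichotomies (note that (ii) is exactly the Blackwell condition after the change of variables $\pi=(1+D)/2$), or via the explicit constructive argument given in \cite{ru08}.
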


\section{Least Degraded Channel}
Now recall that the entropy functional $\mathrm{H}(|\mathfrak{a}|)$
associated with the $|D|$-density $|\mathfrak{a}|$ has $h_2((1-x)/2)$
as its kernel in the $|D|$-domain. This kernel is non-increasing
and convex-$\cap$ on $[0,1]$, and therefore condition (ii) in Lemma
\ref{lemma-degrad} shows that a degraded channel has a larger
entropy, and thus lower capacity, than the original channel.
Similarly, a degraded channel has a larger probability of error and
larger Battacharyya parameter than the original channel.
\begin{property}
Note that there are many equivalent definitions for the {\em least
degraded} channel. One particularly insightful characterization is
that the least degraded channel is the unique channel which is
degraded with respect to all elements of the family $\{\textrm{BMS}(c)\}$
and is upgraded with respect to any other such channel.
Therefore it has the highest capacity, lowest error probability and lowest Battacharyya parameter.
\end{property}

The integral of $|\mathfrak{A}|(x)$ from $z$ to $1$,
as suggested by Lemma \ref{lemma-degrad},
is important for characterizing degradedness among BMS channels.
Therefore, for a BMS channel $a$, define
\begin{equation}
\Lambda_a(z) \triangleq \int_z^1 |\mathfrak{A}|(x) \mathrm{d}x.
\end{equation}
Note that $\Lambda_a(z)$ is decreasing on $[0,1]$,
and since $|\mathfrak{A}|(x)$ is increasing in $x$,
it follows that $\Lambda_a(z)$ is a convex-$\cap$ function of $z$.
Moreover, if the BMS channel $a$ is discrete,
the function $\Lambda_a(z)$ has a simpler form given by
\begin{equation}\label{equ-lambda-form}
\Lambda_a(z) = \sum_{i=1}^n \alpha_i \left(1 - \max\left\{z, x_i\right\}\right).
\end{equation}

Condition (iii) of Lemma \ref{lemma-degrad} shows that,
in order to find the least degraded channel with respect to $\{\textrm{BMS}(c)\}$,
we take the maximum value of $\Lambda_a(z)$ among all BMS channels $a \in \{\textrm{BMS}(c)\}$,
for each fixed $z \in [0,1]$. {E.g.,}~define
\begin{equation}
\overline{\Lambda}(z) \triangleq \max \left\{\Lambda_a(z): a \in \{\textrm{BMS}(c)\}\right\},
\end{equation}
for every $z\in[0,1]$. Taking the convex-$\cap$ envelope of
$\overline{\Lambda}(z)$ then characterizes the desired channel. The following
theorem characterizes $\overline{\Lambda}(z)$ exactly.
\begin{theorem}\label{theorem-overline-Lambda} Consider the family
of BMS channels $\{\textrm{BMS}(c)\}$ of capacity $c$, $0 < c < 1$.
Then, \begin{equation*} \overline{\Lambda}(z) = \begin{cases}
\cfrac{(1-c)(1-z)}{\textsf{h}(z)} & \text{if } z \in
[0,1-2\epsilon_{\textsf{bsc}}),\\ 1-z & \text{if } z \in
[1-2\epsilon_{\textsf{bsc}},1], \end{cases} \end{equation*} where
$\epsilon_{\textsf{bsc}} \in (0,\frac{1}{2})$ is the solution of
$1-h_2(\epsilon_{\textsf{bsc}})=c$.  \end{theorem}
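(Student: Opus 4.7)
The plan is to recast the optimization in Theorem~\ref{theorem-overline-Lambda} as a linear program over $|D|$-densities and to reduce it to a single tight pointwise inequality. First I would rewrite $\Lambda_a(z)$ as a linear functional of the density: by Fubini (or by a limiting argument from (\ref{equ-lambda-form})),
\[
\Lambda_a(z) \;=\; \int_{0}^{1}\bigl(1-\max\{z,x\}\bigr)\,|\mathfrak{a}|(x)\,\mathrm{d}x.
\]
Together with the linear capacity constraint $\int_{0}^{1}\textsf{h}(x)|\mathfrak{a}|(x)\,\mathrm{d}x=1-c$, the problem becomes a linear optimization over probability measures on $[0,1]$ with a single linear moment constraint.

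The crux is the following pointwise inequality $(\star)$, valid for $z\in[0,1)$ and $x\in[0,1]$, with equality at $x=z$ and $x=1$:
\[
1-\max\{z,x\} \;\le\; \frac{1-z}{\textsf{h}(z)}\,\textsf{h}(x).
\]
On $[0,z]$ the left side is $1-z$ and $(\star)$ reduces to $\textsf{h}(x)\ge\textsf{h}(z)$, immediate since $\textsf{h}$ is decreasing. On $[z,1]$, $(\star)$ is equivalent to the monotonicity statement that $\psi(x)\triangleq(1-x)/\textsf{h}(x)$ is non-increasing. Integrating $(\star)$ against $|\mathfrak{a}|$ yields $\Lambda_a(z)\le(1-c)(1-z)/\textsf{h}(z)$; combined with the trivial bound $\Lambda_a(z)\le 1-z$ (coming from $|\mathfrak{A}|\le 1$), the two estimates cross exactly where $\textsf{h}(z)=1-c$, i.e.\ at $z=1-2\epsilon_{\textsf{bsc}}$, producing the claimed piecewise upper bound.

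For achievability, when $z<1-2\epsilon_{\textsf{bsc}}$ I would take the two-mass density $\alpha\,\delta(x-z)+(1-\alpha)\,\delta(x-1)$ with $\alpha=(1-c)/\textsf{h}(z)\in(0,1)$, realized as the mixture of $\mathrm{BSC}((1-z)/2)$ and a noiseless channel with weights $\alpha$ and $1-\alpha$; a direct check gives capacity $c$ and $\Lambda=(1-c)(1-z)/\textsf{h}(z)$. When $z\ge 1-2\epsilon_{\textsf{bsc}}$ the $\mathrm{BSC}(\epsilon_{\textsf{bsc}})$ has its only $|D|$-mass at $1-2\epsilon_{\textsf{bsc}}\le z$, so $|\mathfrak{A}|\equiv 1$ on $[z,1]$ and $\Lambda_a(z)=1-z$.

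The main obstacle I anticipate is proving the monotonicity of $\psi$. Direct differentiation gives $\psi'(x)=-[\textsf{h}(x)+(1-x)\textsf{h}'(x)]/\textsf{h}(x)^{2}$, so everything hinges on signing the numerator. Substituting $\textsf{h}'(x)=-\tfrac{1}{2}\log_2\frac{1+x}{1-x}$ and expanding $\textsf{h}(x)=h_2((1-x)/2)$, a short computation collapses the numerator to the clean identity $\textsf{h}(x)+(1-x)\textsf{h}'(x)=1-\log_2(1+x)$, which is strictly positive on $(0,1)$. Recognizing this cancellation is the delicate step; once the identity is in hand, $(\star)$ and hence the theorem follow.
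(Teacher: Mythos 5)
Your proposal is correct, and it reaches the result by a genuinely different route: a linear-programming/dual-certificate argument rather than the paper's perturbation argument. The paper assumes a maximizing channel $d$ exists, then shows by a mass-splitting contradiction (splitting a mass at $x_0\in(z,1)$ into masses at $z$ and $1$ with weights $\rho=\textsf{h}(x_0)/\textsf{h}(z)$ and $1-\rho$) that $d$ can only have mass at $z$ and $1$. Your pointwise majorization $1-\max\{z,x\}\le\frac{1-z}{\textsf{h}(z)}\textsf{h}(x)$, integrated against $|\mathfrak{a}|$, gives the same bound for \emph{every} member of the family at once, which is cleaner in two respects: it does not presuppose that the supremum is attained or that the optimizer is discrete, and it makes the equality set $\{z,1\}$ (hence the extremal two-mass channel) transparent. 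Note that your inequality and the paper's splitting step are really the same fact in two guises: on $[z,1]$ your $(\star)$ is the statement $\frac{1-x}{\textsf{h}(x)}\le\frac{1-z}{\textsf{h}(z)}$, which the paper obtains from concavity of $\textsf{h}$ together with $\textsf{h}(1)=0$ (the chord from $(z,\textsf{h}(z))$ to $(1,0)$ lies below the graph). That observation would let you bypass the derivative computation entirely; that said, your identity $\textsf{h}(x)+(1-x)\textsf{h}'(x)=1-\log_2(1+x)$ does check out, so the computational route is also sound. Your achievability constructions and the location of the crossover at $\textsf{h}(z)=1-c$ match the paper.
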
 \begin{proof}
First, it is clear that $\sum_{i=1}^n\alpha_i(1-\max\{z,x_i\}) \leq
\sum_{i=1}^n\alpha_i(1-z) = 1-z$, and for $z \in [1 -
2\epsilon_{\textsf{bsc}},1]$ this value, $1-z$, is achieved
if the underlying BMS channel is the BSC with crossover probability
equal to $\epsilon_{\textsf{bsc}}$.

Now for any fixed $z \in [0,1 - 2\epsilon_{\textsf{bsc}})$, assume
that $\overline{\Lambda}(z)$ is achieved by the BMS channel $d$,
{e.g.,}~$\overline{\Lambda}(z)=\Lambda_d(z)$. We claim
that $d$ does not have any probability mass in the interval $(z,1)$.
We show this by contradiction. Suppose that there
exists a probability mass $\alpha_0$ at $x_0 \in (z,1)$. Define
$\rho \triangleq \textsf{h}(x_0) / \textsf{h}(z)$. It is clear that
$\rho \in (0,1)$, since the function $\textsf{h}(\cdot)$ is decreasing
and convex-$\cap$ on $[0,1]$. The definition of $\rho$ gives
\begin{equation}\label{equ-h-mass}
\alpha_0 \textsf{h}(x_0) = \alpha_0\rho \textsf{h}(z) + \alpha_0(1-\rho) \textsf{h}(1),
\end{equation}
which means that, instead of putting the single probability mass $\alpha_0$ at $x_0$,
we can split $\alpha_0$ into two masses, $\alpha_0\rho$ and $\alpha_0(1-\rho)$,
and put these two masses at $z$ and $1$, respectively,
without changing the entropy and thus the capacity.
Canceling $\alpha_0$ on both sides of (\ref{equ-h-mass}) and using the fact
that $\textsf{h}(\cdot)$ is decreasing and convex-$\cap$ on $[0,1]$, we have
\begin{equation*}
\textsf{h}(x_0) < \textsf{h}(\rho z + 1 - \rho),
\end{equation*}
which is equivalent to $1-x_0 < \rho(1-z)$. Now notice that, since $z < x_0 < 1$,
the term corresponding to $x_0$ that contributes to $\Lambda_d(z)$ is equal to $\alpha_0(1-x_0)$. But
\begin{equation*}
\alpha_0(1-x_0) < \alpha_0\rho(1-z) = \alpha_0\rho(1-z) + \alpha_0(1-\rho)(1-1),
\end{equation*}
which means that the value of $\Lambda_d(z)$ can be increased by
the splitting operation mentioned above. This, however, contradicts
the assumption that $\Lambda_d(z)$ is the maximum value at $z$. If
there exists some probability masses on the interval $[0,z)$, we
can add these masses to the probability mass at $z$ and delete the
original masses, without changing $\Lambda_d(z)$ and without violating
the entropy (or capacity) constraint. Thus, the channel $d$ has
probability masses at points $z$ and $1$ only, and the probability
mass at $z$ is equal to $(1-c)/\textsf{h}(z)$, completing the proof.
\end{proof}

Recall that the  function $\Lambda$ associated to a BMS channel is convex-$\cap$
on $[0,1]$. Thus, taking the convex-$\cap$ envelope of
$\overline{\Lambda}(z)$ gives the $\Lambda$ function, call it
$\Lambda^{*}(z)$, of the least degraded channel with respect to the
whole channel family $\{\textrm{BMS}(c)\}$.
\begin{corollary}
The least degraded channel is characterized by
\begin{equation*}
\Lambda^{*}(z) = \begin{cases}
1-c-\cfrac{1-c-2\epsilon_{\textsf{bsc}}}{1-2\epsilon_{\textsf{bsc}}}\,z & \text{if } z \in [0,1-2\epsilon_{\textsf{bsc}}),\\
1-z & \text{if } z \in [1 - 2\epsilon_{\textsf{bsc}},1].
\end{cases}
\end{equation*}
\end{corollary}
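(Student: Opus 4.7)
The plan is to show that the claimed $\Lambda^*$ is precisely the convex-$\cap$ (upper concave) envelope of $\overline{\Lambda}$ determined in Theorem~\ref{theorem-overline-Lambda}; by the discussion preceding the corollary, this envelope coincides with the $\Lambda$-function of the least degraded channel. I would organize the verification into three parts: concavity of $\Lambda^*$, the majorization $\Lambda^*\geq\overline{\Lambda}$, and minimality among concave majorants.

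First, I would evaluate $\overline{\Lambda}$ at the candidate contact points. Using $\textsf{h}(0)=1$ gives $\overline{\Lambda}(0)=1-c$, while the defining relation $1-h_2(\epsilon_{\textsf{bsc}})=c$ yields $\textsf{h}(1-2\epsilon_{\textsf{bsc}})=h_2(\epsilon_{\textsf{bsc}})=1-c$ and hence $\overline{\Lambda}(1-2\epsilon_{\textsf{bsc}})=2\epsilon_{\textsf{bsc}}$; obviously $\overline{\Lambda}(1)=0$. The formula for $\Lambda^*$ is exactly the continuous piecewise-linear curve through these three points. Concavity of $\Lambda^*$ reduces to ordering its two slopes: the left slope $-(1-c-2\epsilon_{\textsf{bsc}})/(1-2\epsilon_{\textsf{bsc}})$ must be no smaller than the right slope $-1$, which is the same as $c\geq 0$.

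Second, and technically most involved, I would establish the majorization $\Lambda^*(z)\geq\overline{\Lambda}(z)$ on $[0,1]$. On $[1-2\epsilon_{\textsf{bsc}},1]$ the two coincide. On $[0,1-2\epsilon_{\textsf{bsc}}]$ the inequality becomes
\begin{equation*}
1-c-\frac{1-c-2\epsilon_{\textsf{bsc}}}{1-2\epsilon_{\textsf{bsc}}}\,z\;\geq\;\frac{(1-c)(1-z)}{\textsf{h}(z)},
\end{equation*}
i.e., the chord of $z\mapsto (1-c)(1-z)/\textsf{h}(z)$ between the interval's endpoints dominates the function. After the substitution $\epsilon=(1-z)/2$ this is equivalent to the chord of $\tilde{g}(\epsilon)=2\epsilon/h_2(\epsilon)$ on $[\epsilon_{\textsf{bsc}},1/2]$ dominating $\tilde{g}$, which a short rearrangement identifies with the monotonicity condition $s(\epsilon)\geq s(\epsilon_{\textsf{bsc}})$ for the secant slope $s(\epsilon)=(1-\tilde{g}(\epsilon))/(1/2-\epsilon)$. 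Differentiating $s$ reduces the matter to the binary-entropy inequality
\begin{equation*}
2\epsilon\bigl(\tfrac{1}{2}-\epsilon\bigr)h_2'(\epsilon)\;\geq\;h_2(\epsilon)\bigl(1-h_2(\epsilon)\bigr),
\end{equation*}
which one verifies by direct calculus using $h_2'(\epsilon)=\log_2((1-\epsilon)/\epsilon)$.

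Third, minimality is immediate from concavity. Any concave $\Lambda'$ with $\Lambda'\geq\overline{\Lambda}$ must satisfy $\Lambda'(0)\geq 1-c$, $\Lambda'(1-2\epsilon_{\textsf{bsc}})\geq 2\epsilon_{\textsf{bsc}}$, and $\Lambda'(1)\geq 0$; concavity on $[0,1-2\epsilon_{\textsf{bsc}}]$ then forces $\Lambda'$ above the chord joining $(0,\Lambda'(0))$ and $(1-2\epsilon_{\textsf{bsc}},\Lambda'(1-2\epsilon_{\textsf{bsc}}))$, which lies above the chord defining $\Lambda^*$; on $[1-2\epsilon_{\textsf{bsc}},1]$ we already have $\Lambda'\geq\overline{\Lambda}=\Lambda^*$. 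As a sanity check, the $|D|$-density carrying mass $(1-c-2\epsilon_{\textsf{bsc}})/(1-2\epsilon_{\textsf{bsc}})$ at $0$ and mass $c/(1-2\epsilon_{\textsf{bsc}})$ at $1-2\epsilon_{\textsf{bsc}}$ is a valid BMS channel whose $\Lambda$-function coincides with $\Lambda^*$, exhibiting the least degraded channel as the time-sharing of a completely noisy channel with $\mathrm{BSC}(\epsilon_{\textsf{bsc}})$. The main obstacle is the majorization step: the binary-entropy inequality is rather tight (both sides vanish to the same leading order as $\epsilon\to 0$), so the proof must go beyond the crude bound $h_2(\epsilon)\geq 2\epsilon$ and track lower-order terms across the full interval $[\epsilon_{\textsf{bsc}},1/2]$.
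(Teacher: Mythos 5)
Your overall strategy---identify $\Lambda^{*}$ as the concave envelope of $\overline{\Lambda}$ and check concavity, majorization, and minimality---is exactly the route the paper takes; the paper compresses it into one sentence, implicitly relying on the claim (asserted without proof only in its simulation section) that $\overline{\Lambda}$ is convex-$\cup$ on $[0,1-2\epsilon_{\textsf{bsc}})$, so that the envelope there is the chord. Your reduction of the majorization step to $s(\epsilon)\geq s(\epsilon_{\textsf{bsc}})$ is also correct. The gap is that the binary-entropy inequality you end up needing, $2\epsilon(\tfrac{1}{2}-\epsilon)h_2'(\epsilon)\geq h_2(\epsilon)\bigl(1-h_2(\epsilon)\bigr)$, is simply false for small $\epsilon$: at $\epsilon=0.05$ the left side is $0.045\log_2 19\approx 0.191$ while the right side is $h_2(0.05)\bigl(1-h_2(0.05)\bigr)\approx 0.2864\times 0.7136\approx 0.204$. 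Equivalently, $\tilde{g}(\epsilon)=2\epsilon/h_2(\epsilon)$ has an inflection near $\epsilon\approx 0.088$ and is concave, not convex, below it, so $s$ is not monotone and your argument cannot be completed as written. Your closing remark that the inequality is ``rather tight'' is an understatement: it fails outright.

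This is not a repairable technicality of your method, because when $\epsilon_{\textsf{bsc}}$ lies in the region where the inequality fails (i.e., for $c$ larger than roughly $0.57$) the majorization itself is false. Take $c=0.8$, so $\epsilon_{\textsf{bsc}}\approx 0.0311$ and $1-2\epsilon_{\textsf{bsc}}\approx 0.9377$. At $z=0.9$ one has $\overline{\Lambda}(0.9)=(0.2)(0.1)/h_2(0.05)\approx 0.0698$, attained by the capacity-$0.8$ channel with mass $0.2/h_2(0.05)\approx 0.698$ at $0.9$ and the remainder at $1$, whereas the corollary's formula gives $0.2-\frac{0.2-0.0622}{0.9377}\cdot 0.9\approx 0.0678<0.0698$. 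Hence the stated $\Lambda^{*}$ is not even a majorant of $\overline{\Lambda}$, and by Lemma~\ref{lemma-degrad}(iii) the corresponding channel is not degraded with respect to that family member; the corollary as stated cannot be correct for such $c$. In that regime the true concave envelope follows the tangent from $(0,1-c)$ to the concave portion of $\overline{\Lambda}$ and then coincides with $\overline{\Lambda}$ itself. In short, by attempting to verify the majorization the paper takes for granted, you have exposed a genuine error in the statement; for small $c$ (where $\epsilon_{\textsf{bsc}}$ exceeds the inflection point of $\tilde{g}$, so that $s$ is increasing on the whole interval) your argument does go through.
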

Once having this characterization, we can derive the
exact formula of the capacity of the least degraded channel.
\begin{theorem}\label{theorem-de-c}
Given a family of BMS channels of capacity $c$,
the capacity in bits per channel use of the least degraded channel is given by
\begin{equation}
\mathrm{C}^{*} = \cfrac{c^2}{1-2\epsilon_{\textsf{bsc}}}.
\end{equation}
\end{theorem}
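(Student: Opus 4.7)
My plan is to read off the $|D|$-density of the least degraded channel from the corollary's formula for $\Lambda^{*}$, and then evaluate the capacity via the discrete-channel expression $\mathrm{C}=\sum_i \alpha_i(1-\textsf{h}(x_i))$.

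First I would invert the relation $\Lambda(z)=\int_z^1 |\mathfrak{A}|(x)\,\mathrm{d}x$ by differentiation: wherever $\Lambda^{*}$ is smooth, $|\mathfrak{A}^{*}|(z)=-\tfrac{\mathrm{d}}{\mathrm{d}z}\Lambda^{*}(z)$. Since $\Lambda^{*}$ is piecewise linear with slope $-(1-c-2\epsilon_{\textsf{bsc}})/(1-2\epsilon_{\textsf{bsc}})$ on $[0,1-2\epsilon_{\textsf{bsc}})$ and slope $-1$ on $(1-2\epsilon_{\textsf{bsc}},1]$, the distribution $|\mathfrak{A}^{*}|$ is a two-valued step function, equal to $\alpha_0:=(1-c-2\epsilon_{\textsf{bsc}})/(1-2\epsilon_{\textsf{bsc}})$ on the left piece and to $1$ on the right piece. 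The positive initial value of the CDF corresponds to a Dirac mass $\alpha_0$ at $x_0=0$, and the jump at the breakpoint contributes a second Dirac mass $\alpha_1:=1-\alpha_0=c/(1-2\epsilon_{\textsf{bsc}})$ at $x_1:=1-2\epsilon_{\textsf{bsc}}$. In other words, the least degraded channel is the convex mixture, with weights $\alpha_0$ and $\alpha_1$, of the completely erased channel and $\textrm{BSC}(\epsilon_{\textsf{bsc}})$.

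With these two atoms identified, the capacity becomes $\mathrm{C}^{*}=\alpha_0(1-\textsf{h}(x_0))+\alpha_1(1-\textsf{h}(x_1))$. The first summand vanishes because $\textsf{h}(0)=h_2(1/2)=1$, which is just the statement that the erasure branch carries no information. For the second, the defining relation $1-h_2(\epsilon_{\textsf{bsc}})=c$ yields $\textsf{h}(x_1)=h_2(\epsilon_{\textsf{bsc}})=1-c$, so $1-\textsf{h}(x_1)=c$. Combining, $\mathrm{C}^{*}=\alpha_1\,c=c^{2}/(1-2\epsilon_{\textsf{bsc}})$, which is the claimed formula.

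The only subtle step is recognizing that $|\mathfrak{A}^{*}|(0)=\alpha_0>0$ must be interpreted as a Dirac mass at the origin and not silently discarded when one differentiates on an open interval; the rest is bookkeeping. As a sanity check, $\alpha_0+\alpha_1=1$ and $\int_0^1 |\mathfrak{A}^{*}|(x)\,\mathrm{d}x=\alpha_0(1-2\epsilon_{\textsf{bsc}})+2\epsilon_{\textsf{bsc}}=1-c$, which reproduces the value $\Lambda^{*}(0)=1-c$ predicted by the corollary.
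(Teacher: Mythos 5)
Your proof is correct, and it takes a different route from the paper's. The paper proves the theorem by inserting the corollary's formula for $\Lambda^{*}(z)$ into the integral identity $\mathrm{H}(|\mathfrak{a}|)=\int_0^1 \frac{1}{\ln 2\,(1-z^2)}\bigl(\int_z^1|\mathfrak{A}|(x)\,\mathrm{d}x\bigr)\mathrm{d}z$ and integrating over $z$; you instead differentiate $\Lambda^{*}$ to recover the underlying $|D|$-distribution, identify the least degraded channel explicitly as the mixture $\alpha_0\,\delta(x)+\alpha_1\,\delta(x-(1-2\epsilon_{\textsf{bsc}}))$ with $\alpha_0=(1-c-2\epsilon_{\textsf{bsc}})/(1-2\epsilon_{\textsf{bsc}})$ and $\alpha_1=c/(1-2\epsilon_{\textsf{bsc}})$, and then apply the discrete formula $\mathrm{C}=\sum_i\alpha_i(1-\textsf{h}(x_i))$. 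The arithmetic checks out: $\textsf{h}(0)=1$ kills the first atom, $\textsf{h}(1-2\epsilon_{\textsf{bsc}})=h_2(\epsilon_{\textsf{bsc}})=1-c$ gives the second term $\alpha_1 c=c^2/(1-2\epsilon_{\textsf{bsc}})$, and you correctly handle the two places where mass hides in the piecewise-linear $\Lambda^{*}$ (the nonzero constant slope on the left piece, giving the atom at $0$, and the kink at $1-2\epsilon_{\textsf{bsc}}$, giving the second atom). What your approach buys is an explicit structural description of the least degraded channel --- a convex combination of a completely uninformative channel and $\textrm{BSC}(\epsilon_{\textsf{bsc}})$ --- which the paper's kernel-integration argument leaves implicit; what the paper's approach buys is that it bypasses the inversion step entirely and works directly from $\Lambda^{*}$ without needing to argue that the recovered distribution is a legitimate two-atom $|D|$-density.
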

\begin{proof}
Recall that the entropy can be expressed in the following alternative form,
\begin{equation}\label{equ-H-alt}
\mathrm{H}(|\mathfrak{a}|) = \int_0^1\cfrac{1}{\ln2(1-z^2)}\left(\int_z^1|\mathfrak{A}|(x)\right)\mathrm{d}z.
\end{equation}
Inserting the formula for $\Lambda^{*}(z)$ into (\ref{equ-H-alt}) and integrating over $z$ gives the desired result.
\end{proof}

\section{Least Upgraded Channel}
\begin{property}
The {\em least upgraded} channel is the unique channel which is
upgraded with respect to to all elements of the family $\{\textrm{BMS}(c)\}$
and is degraded with respect to any other such channel. Therefore this channel has
the lowest capacity, highest error probability and highest Battacharyya parameter.
\end{property}
In order to specify the least upgraded channel, we first notice the following lemma.
\begin{lemma}\label{lemma-min-pos}
For any BMS channel associated with the $|D|$-density of the form in (\ref{equ-abs-d-density})
and having capacity $c$, $0 < c < 1$, we have $1-2\epsilon_{\textsf{bsc}} \leq x_n \leq 1$.
\end{lemma}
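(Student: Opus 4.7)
The upper bound $x_n\le 1$ is built into the form of the density (\ref{equ-abs-d-density}), so the whole content of the lemma is the lower bound $x_n\ge 1-2\epsilon_{\textsf{bsc}}$. The plan is to argue by contradiction, exploiting the capacity constraint together with monotonicity of the kernel $\textsf{h}(\cdot)=h_2((1-\cdot)/2)$.

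First I would translate the threshold $1-2\epsilon_{\textsf{bsc}}$ into entropy language. By definition $h_2(\epsilon_{\textsf{bsc}})=1-c$, and a direct substitution gives $\textsf{h}(1-2\epsilon_{\textsf{bsc}})=h_2(\epsilon_{\textsf{bsc}})=1-c$. Intuitively this is just the statement that a BSC with crossover $\epsilon_{\textsf{bsc}}$ (whose $|D|$-density is the point mass $\delta(x-(1-2\epsilon_{\textsf{bsc}}))$) is the capacity-$c$ channel with the largest possible $|D|$-mass location, and this is the fact the lemma is formalizing.

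Next, suppose for contradiction that $x_n<1-2\epsilon_{\textsf{bsc}}$. Then $x_i\le x_n<1-2\epsilon_{\textsf{bsc}}$ for every $i$. The kernel $\textsf{h}$ is strictly decreasing on $[0,1]$ (since $h_2$ is strictly increasing on $[0,1/2]$ and $(1-x)/2$ is strictly decreasing in $x$), so $\textsf{h}(x_i)>\textsf{h}(1-2\epsilon_{\textsf{bsc}})=1-c$ for all $i$. Using the formula for the entropy of a discrete BMS channel and $\sum_i\alpha_i=1$, this yields
\begin{equation*}
\mathrm{H}(|\mathfrak{a}|)=\sum_{i=1}^n\alpha_i\,\textsf{h}(x_i)>\sum_{i=1}^n\alpha_i(1-c)=1-c,
\end{equation*}
which contradicts the hypothesis that the channel has capacity $c$, i.e.\ entropy $1-c$. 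Hence $x_n\ge 1-2\epsilon_{\textsf{bsc}}$, and there is no real obstacle in the argument beyond recognizing the identity $\textsf{h}(1-2\epsilon_{\textsf{bsc}})=1-c$ and invoking strict monotonicity of $\textsf{h}$.
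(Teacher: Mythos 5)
Your proof is correct and takes essentially the same route as the paper: contradiction via the identity $\textsf{h}(1-2\epsilon_{\textsf{bsc}})=1-c$ and strict monotonicity of $\textsf{h}$. If anything, you spell out more carefully than the paper why $\textsf{h}(x_n)>1-c$ forces the whole weighted sum above $1-c$.
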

\begin{proof}
Assume on the contrary that $x_n < 1-2\epsilon_{\textsf{bsc}}$.
Then, the monotonicity property of $\textsf{h}(\cdot)$ gives
\begin{equation*}
\textsf{h}(x_n) > \textsf{h}(1-2\epsilon_{\textsf{bsc}}) = 1-c,
\end{equation*}
which is equivalent to $\sum_{i=1}^n \alpha_i \textsf{h}(x_i) > 1-c$,
contradicting the assumption that the entropy is $1-c$.
\end{proof}

Now, for each fixed $z \in [0,1]$, we take the minimum value of $\Lambda_a(z)$
among all BMS channels $a \in \{\textrm{BMS}(c)\}$, {e.g.,}~define
\begin{equation}
\underline{\Lambda}(z) \triangleq \min \left\{\Lambda_a(z): a \in \{\textrm{BMS}(c)\}\right\}.
\end{equation}
Suppose that, for any fixed $z \in [0,1]$,
the minimum value is achieved by the channel $u$,
{i.e.,}~$\underline{\Lambda}(z)=\Lambda_u(z)$. For channel $u$,
the number of probability masses on the interval $[z,1]$ is characterized by the following lemma.
\begin{lemma}\label{lemma-mass-on-z-1}
The channel $u$, which achieves the minimum value $\underline{\Lambda}(z)$ at $z$,
has at most one probability mass on $[z,1]$.
\end{lemma}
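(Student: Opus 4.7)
The idea is to mirror, in reverse, the mass-splitting technique used in Theorem \ref{theorem-overline-Lambda}: instead of splitting one mass at a point of $(z,1)$ into masses at the endpoints $z$ and $1$, I would merge any two masses lying in $[z,1]$ into a single one and show that this operation strictly decreases $\Lambda_u(z)$ while preserving the entropy (and hence the capacity) constraint. Since this reduces the number of masses in $[z,1]$ by one each time, the minimizer must carry at most a single mass there.

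More precisely, assume for contradiction that $u$ carries two distinct probability masses $\alpha_1,\alpha_2>0$ at points $x_1<x_2$ in $[z,1]$. Their joint contribution to $\Lambda_u(z)$ is, by (\ref{equ-lambda-form}), $\alpha_1(1-x_1)+\alpha_2(1-x_2)$, while their joint contribution to the entropy is $\alpha_1\textsf{h}(x_1)+\alpha_2\textsf{h}(x_2)$. Define $x^{\ast}\in[0,1]$ by
\begin{equation*}
(\alpha_1+\alpha_2)\,\textsf{h}(x^{\ast}) \;=\; \alpha_1\textsf{h}(x_1)+\alpha_2\textsf{h}(x_2),
\end{equation*}
which is well-posed because $\textsf{h}$ is a continuous, strictly decreasing bijection from $[0,1]$ onto $[0,1]$. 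Since $\textsf{h}(x^{\ast})$ is a convex combination of $\textsf{h}(x_1),\textsf{h}(x_2)\in[0,\textsf{h}(z)]$, we obtain $x^{\ast}\in[z,1]$; replacing the two masses by the single mass $\alpha_1+\alpha_2$ at $x^{\ast}$ therefore yields another channel in $\{\textrm{BMS}(c)\}$ with one fewer mass point in $[z,1]$.

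To close the argument I would invoke the strict concavity of $\textsf{h}$ on $[0,1]$: letting $\bar{x}\triangleq(\alpha_1 x_1+\alpha_2 x_2)/(\alpha_1+\alpha_2)$, strict concavity gives $\textsf{h}(\bar{x})>\textsf{h}(x^{\ast})$, and strict monotonicity of $\textsf{h}$ then yields $x^{\ast}>\bar{x}$. Consequently,
\begin{equation*}
(\alpha_1+\alpha_2)(1-x^{\ast}) \;<\; (\alpha_1+\alpha_2)(1-\bar{x}) \;=\; \alpha_1(1-x_1)+\alpha_2(1-x_2),
\end{equation*}
so the merging operation strictly decreases $\Lambda_u(z)$, contradicting the minimality of $\Lambda_u(z)$. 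The one subtlety that needs care is confirming that the merged point $x^{\ast}$ does not escape $[z,1]$, so that its contribution to $\Lambda$ is genuinely $(\alpha_1+\alpha_2)(1-x^{\ast})$ rather than $(\alpha_1+\alpha_2)(1-z)$; this is exactly what the convex-combination observation above secures, and is the main obstacle in turning the intuitive merging heuristic into a rigorous proof.
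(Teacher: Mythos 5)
Your proposal is correct and follows essentially the same argument as the paper: merge the two masses in $[z,1]$ into one at the entropy-preserving point, then use strict concavity of $\textsf{h}$ to show the merged point lies strictly to the right of the mean position, which strictly decreases $\Lambda_u(z)$. Your explicit check that the merged point stays in $[z,1]$ is a small (and welcome) addition of care, but the route is the same.
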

\begin{proof}
Assume there are two probability masses $\alpha_{-}$ and $\alpha_{+}$
at $x_{-} \in [z,1]$ and $x_{+} \in [z,1]$, respectively.
Then, there exists a point $\tilde{x}$ on $[x_{-},x_{+}]$ such that
\begin{equation}\label{equ-merge}
\alpha_{-}\textsf{h}(x_{-}) + \alpha_{+}\textsf{h}(x_{+}) = (\alpha_{-}+\alpha_{+})\textsf{h}(\tilde{x}).
\end{equation}
Dividing both sides of (\ref{equ-merge}) by ($\alpha_{-}+\alpha_{+}$)
and using the convex-$\cap$ property of $\textsf{h}(\cdot)$ gives
\begin{equation*}
\textsf{h}(\tilde{x}) < \textsf{h}\left(\frac{\alpha_{-}}{\alpha_{-}+\alpha_{+}}x_{-}+\frac{\alpha_{+}}{\alpha_{-}+\alpha_{+}}x_{+}\right),
\end{equation*}
which means that $\alpha_{-}x_{-}+\alpha_{+}x_{+} < (\alpha_{-}+\alpha_{+}) \tilde{x}$.
Since the terms corresponding to $x_{-}$ and $x_{+}$ that contribute to $\Lambda_u(z)$ is equal to $\alpha_{-}(1-x_{-})+\alpha_{+}(1-x_{+})$, we have
\begin{align*}
&\alpha_{-}(1-x_{-})+\alpha_{+}(1-x_{+}) \\
&= (\alpha_{-}+\alpha_{+})\left(1-\frac{\alpha_{-}}{\alpha_{-}+\alpha_{+}}x_{-} - \frac{\alpha_{+}}{\alpha_{-}+\alpha_{+}}x_{+}\right)\\
&> (\alpha_{-}+\alpha_{+})(1-\tilde{x}),
\end{align*}
which shows that by combining two probability masses $\alpha_{-}$ and $\alpha_{+}$ into a single probability mass $(\alpha_{-}+\alpha_{+})$ at position $\tilde{x}$, the value of $\Lambda_u(z)$ is decreased, contradicting the assumption that $\Lambda_u(z)$ is minimal at point $z$.
\end{proof}

Lemma \ref{lemma-min-pos} and Lemma \ref{lemma-mass-on-z-1} show that the channel $u$ has at most two probability masses on the interval $[0,1]$. If it has only one probability mass, it is in fact a BSC. Otherwise, in general, there are two probability masses, call them $\gamma$ and $1-\gamma$, on the intervals $[0,z]$ and $[1-2\epsilon_{\textsf{bsc}},1]$, respectively. Denote by $\check{x}$ and $\hat{x}$ the positions of these two masses, respectively.
\begin{lemma}\label{lemma-0-1}
Either $\check{x} = 0$ holds or $\hat{x} = 1$ holds, or $\check{x} = 0$ and $\hat{x} = 1$ hold simultaneously.
\end{lemma}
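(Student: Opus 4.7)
The plan is to argue by contradiction via a local perturbation of the two probability masses of $u$. Suppose, for contradiction, that both $\check{x}>0$ and $\hat{x}<1$ hold strictly. I will construct another two-mass channel $u'\in\{\textrm{BMS}(c)\}$ that still satisfies the structural conditions of Lemmas~\ref{lemma-min-pos} and~\ref{lemma-mass-on-z-1} yet achieves a strictly smaller value of $\Lambda$ at $z$, contradicting the assumed minimality of $\Lambda_u(z)$.

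For a small parameter $\varepsilon>0$, I would keep the weights $\gamma$ and $1-\gamma$ unchanged, shift the left mass from $\check{x}$ to $\check{x}-\varepsilon$, and determine the new right-mass position $\hat{x}'$ from the entropy-preservation constraint
\begin{equation*}
\gamma\,\textsf{h}(\check{x}-\varepsilon)+(1-\gamma)\,\textsf{h}(\hat{x}')=1-c.
\end{equation*}
Because $\textsf{h}$ is strictly decreasing and continuous on $[0,1]$, moving $\check{x}$ leftward strictly increases $\gamma\,\textsf{h}(\check{x}-\varepsilon)$, which forces $\textsf{h}(\hat{x}')<\textsf{h}(\hat{x})$ and therefore $\hat{x}'>\hat{x}$. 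The assumptions $\check{x}>0$ and $\hat{x}<1$, combined with the continuity of $\textsf{h}$ and the intermediate value theorem, guarantee that for all sufficiently small $\varepsilon$ we have $\check{x}-\varepsilon\in[0,z]$ and $\hat{x}'\in(\hat{x},1)\subset[1-2\epsilon_{\textsf{bsc}},1]$, so $u'$ is a legitimate member of $\{\textrm{BMS}(c)\}$ of the two-mass form under consideration.

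Evaluating the new $\Lambda$ via (\ref{equ-lambda-form}) in the non-trivial regime $z\le 1-2\epsilon_{\textsf{bsc}}\le\hat{x}$, both $\check{x}-\varepsilon$ and $\check{x}$ lie in $[0,z]$ while $\hat{x}$ and $\hat{x}'$ lie in $[z,1]$, so
\begin{equation*}
\Lambda_{u'}(z)=\gamma(1-z)+(1-\gamma)(1-\hat{x}')<\gamma(1-z)+(1-\gamma)(1-\hat{x})=\Lambda_u(z),
\end{equation*}
which is the desired contradiction and forces at least one of $\check{x}=0$ or $\hat{x}=1$ to hold. The only point that needs care is the admissibility of the perturbation, which is however immediate from the strict monotonicity of $\textsf{h}$ together with the strict slack provided by $\check{x}>0$ and $\hat{x}<1$; the remainder is a direct computation. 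An entirely symmetric perturbation (shifting $\hat{x}$ rightward and adjusting $\check{x}$ downward) would yield the same conclusion, which is reassuring.
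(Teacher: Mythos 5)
Your proof is correct, but it takes a genuinely different (and arguably cleaner) route than the paper's. The paper perturbs \emph{both} atom positions by small amounts $\zeta$ and $\delta$ and simultaneously transfers a small amount of probability mass $\gamma_0$ from the left atom to the right atom, choosing $\gamma_0$ via a first-order Taylor expansion of $\textsf{h}$ so that each atom's entropy contribution is separately preserved; the strict decrease of $\Lambda_u(z)$ then comes from the combined term $\gamma_0(z-\hat{x})-(1-\gamma+\gamma_0)\delta<0$. You instead freeze the weights $\gamma$ and $1-\gamma$ and slide only the positions, solving the exact entropy constraint for $\hat{x}'$ by strict monotonicity of $\textsf{h}$ and the intermediate value theorem; the decrease of $\Lambda$ comes purely from $\hat{x}'>\hat{x}$, the left atom's contribution $\gamma(1-\max\{z,\check{x}-\varepsilon\})=\gamma(1-z)$ being unchanged. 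What your version buys is rigor and economy: you avoid the paper's informal $\doteq$ manipulations and the need to balance two infinitesimals against a mass transfer, and admissibility is checked exactly rather than to first order (in particular $\textsf{h}(\hat{x})>0$ because $\hat{x}<1$, so for small $\varepsilon$ the required $\hat{x}'$ exists in $(\hat{x},1)$). The only hypotheses you use implicitly are $0<\gamma<1$ (otherwise $u$ is a one-atom channel, which is the separate case (i)) and $\hat{x}\geq z$, which is part of the paper's case (ii) setup and is equally implicit in the paper's own computation, so nothing is lost.
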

\begin{proof}
Suppose on the contrary that $\check{x} \neq 0$ and $\hat{x} \neq 1$. Then, consider decreasing $\check{x}$ by $\zeta$, where $\zeta \in \mathbb{R}_{+}$ is sufficiently small. Assume that $\hat{x}$ is increased by $\delta$ correspondingly, where $\delta \in \mathbb{R}_{+}$ is sufficiently small. First-order Taylor expansion gives
\begin{align*}
\textsf{h}(\check{x}-\zeta) &= \textsf{h}(\check{x}) - \zeta \textsf{h}'(\check{x}) + \mathcal{O}(\zeta^2),\\
\textsf{h}(\hat{x}+\delta) &= \textsf{h}(\hat{x}) + \delta \textsf{h}'(\hat{x}) + \mathcal{O}(\delta^2).
\end{align*}
Eliminating second and higher order terms results in
\begin{align*}
\textsf{h}(\check{x}) &\doteq  \textsf{h}(\check{x}-\zeta) + \zeta \textsf{h}'(\check{x}),\\
\textsf{h}(\hat{x}) &\doteq \textsf{h}(\hat{x}+\delta) - \delta \textsf{h}'(\hat{x}).
\end{align*}
Multiplying the above two equations by $\gamma$ and $1-\gamma$, respectively, and rearranging terms give
\begin{align*}
\gamma \textsf{h}(\check{x}) &\doteq  \gamma \left(\textsf{h}(\check{x}-\zeta) + \frac{\zeta \textsf{h}'(\check{x})}{\textsf{h}(\check{x}-\zeta)} \textsf{h}(\check{x}-\zeta)\right),\\
(1-\gamma) \textsf{h}(\hat{x}) &\doteq (1-\gamma)\left(\textsf{h}(\hat{x}+\delta) - \frac{\delta \textsf{h}'(\hat{x})}{\textsf{h}(\hat{x}+\delta)} \textsf{h}(\hat{x}+\delta)\right).
\end{align*}
Now, for any sufficiently small $\zeta > 0$, one can pick $\delta > 0$ small enough, such that
\begin{equation}
- \cfrac{\gamma \zeta \textsf{h}'(\check{x})}{\textsf{h}(\check{x}-\zeta)} = - \cfrac{(1-\gamma) \delta \textsf{h}'(\hat{x})}{\textsf{h}(\hat{x}+\delta)} \triangleq \gamma_{0},
\end{equation}
and $\gamma_{0} > 0$, and we then have
\begin{align*}
\gamma \textsf{h}(\check{x}) &\doteq (\gamma - \gamma_{0})\textsf{h}(\check{x}-\zeta),\\
(1-\gamma) \textsf{h}(\hat{x}) &\doteq (1-\gamma+\gamma_{0}) \textsf{h}(\hat{x}+\delta),
\end{align*}
which means that by deleting $\gamma_0$ from $\gamma$ at position $\check{x}-\zeta$ and adding $\gamma_0$ to $1-\gamma$ at position $\hat{x}+\delta$, one can keep the entropy constraint satisfied. Now, denote by $\Lambda_{u'}(z)$ the result after the above operations. Then, we can obtain
\begin{align*}
&\Lambda_{u'}(z) - \Lambda_{u}(z)\\
&= (\gamma-\gamma_0)(1-z) + (1-\gamma+\gamma_0)(1-\hat{x}-\delta)\\
&\quad \quad - \gamma(1-z) - (1-\gamma)(1-\hat{x})\\
&= \gamma_0(z-\hat{x})-(1-\gamma+\gamma_0)\delta < 0,
\end{align*}
which shows that $\Lambda_u(z)$ is not minimal at $z$, contradicting the assumption that the channel $u$ achieves the minimum value $\underline{\Lambda}(z)$ at $z$.
\end{proof}

Lemma \ref{lemma-0-1} shows that, for the channel $u$ achieving $\underline{\Lambda}(z)$ at $z$, its probability masses and in particular their associated positions cannot be arbitrary. Indeed, only the following three cases are possible.
\begin{enumerate}[(i)]
\item There is only one probability mass on the interval $[0,1]$. Then the channel $u$ is in fact a BSC such that there is a probability mass 1 at position $1-2\epsilon_{\textsf{bsc}}$, and $\Lambda_{\textsf{bsc}}(z) = 2\epsilon_{\textsf{bsc}}$ if $z \in [0,1-2\epsilon_{\textsf{bsc}})$, while $\Lambda_{\textsf{bsc}}(z) = 1-z$ if $z \in [1-2\epsilon_{\textsf{bsc}},1]$.
\item There are two probability masses $\gamma$ and $1-\gamma$ at positions $\check{x}$ and $\hat{x}$, respectively. Particularly, $\check{x}=0$, and $\hat{x} \in [1-2\epsilon_{\textsf{bsc}},1)$ and $\hat{x} \geq z$. In this case, we have $\Lambda_u(z)=\gamma(1-z)+(1-\gamma)(1-\hat{x})$, and the entropy constraint is $\gamma + (1-\gamma)\textsf{h}(\hat{x}) = 1-c$. Notice that $\hat{x}$ and $\gamma$ are parameters that should be optimized. Denote by $\Lambda_{\textsf{opt}}(z)$ the corresponding optimal value.
\item There are two probability masses $1-c$ and $c$ at positions $\check{x}=0$ and $\hat{x}=1$; namely, the channel $u$ is a BEC, and $\Lambda_{\textsf{bec}} = (1-c)(1-z)$ for $z \in [0,1]$.
\end{enumerate}

In order to find the minimum value $\underline{\Lambda}(z)$ at each point $z \in [0,1]$, now it suffices to compare the point-wise results of the above three cases. Case (ii) above is a non-trivial case, since $\hat{x}$ and $\gamma$ are unknown parameters. However, the next lemma characterizes the optimal solutions of these parameters.
\begin{lemma}\label{lemma-optimization}
The optimization problem
\begin{equation}
\begin{aligned}
& \underset{\gamma,\hat{x}}{\text{minimize}} && \gamma(1-z) + (1-\gamma)(1-\hat{x}), \\
& \text{subject to} && \gamma + (1-\gamma)\textsf{h}(\hat{x}) = 1 - c.
\end{aligned}
\end{equation}
has the optimal solutions $\gamma(z) = (1-c-\textsf{h}(x(z))) / (1-\textsf{h}(x(z)))$ and $x(z)$ satisfying the fixed-point equation $x(z) = (1-x(z))^{\frac{z-1}{z+1}} - 1$.
\end{lemma}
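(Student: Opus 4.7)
The plan is to attack this via a one-parameter reduction: the equality constraint $\gamma + (1-\gamma)\textsf{h}(\hat{x}) = 1-c$ is linear in $\gamma$, so I would solve it directly to obtain $\gamma = (1-c-\textsf{h}(\hat{x}))/(1-\textsf{h}(\hat{x}))$ (which already gives the claimed formula for $\gamma(z)$ once the optimal $\hat{x} = x(z)$ is identified), and correspondingly $1-\gamma = c/(1-\textsf{h}(\hat{x}))$. Substituting into the objective reduces the problem to the unconstrained minimization of
\begin{equation*}
f(\hat{x}) \;=\; \frac{(1-z)(1-c-\textsf{h}(\hat{x})) + c(1-\hat{x})}{1-\textsf{h}(\hat{x})}
\end{equation*}
over $\hat{x}\in[1-2\epsilon_{\textsf{bsc}},1)$. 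Equivalently, one could use a Lagrangian in the two variables $(\gamma,\hat{x})$; the $\gamma$-derivative gives $(\hat{x}-z)=\lambda(1-\textsf{h}(\hat{x}))$ and the $\hat{x}$-derivative gives $\lambda=-1/\textsf{h}'(\hat{x})$, which leads to the same single stationarity equation.

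Next I would compute $f'(\hat{x})$, clear the common factor $(1-\textsf{h}(\hat{x}))^{-2}$, and cancel the cross terms. After routine simplification the first-order condition collapses to
\begin{equation*}
(z-\hat{x})\,\textsf{h}'(\hat{x}) \;=\; 1-\textsf{h}(\hat{x}).
\end{equation*}
This is the single clean condition that captures optimality in $\hat{x}$.

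The last step is to translate this into the stated fixed-point equation. I would substitute the explicit expressions $\textsf{h}'(\hat{x}) = -\tfrac{1}{2}\log_2\tfrac{1+\hat{x}}{1-\hat{x}}$ and, after converting to natural logarithms, the identity
\begin{equation*}
1-\textsf{h}(\hat{x}) \;=\; \tfrac{1}{\ln 2}\Bigl(\tfrac{1-\hat{x}}{2}\ln(1-\hat{x}) + \tfrac{1+\hat{x}}{2}\ln(1+\hat{x})\Bigr),
\end{equation*}
into the stationarity equation. Multiplying through by $2\ln 2$ and collecting the coefficients of $\ln(1+\hat{x})$ and $\ln(1-\hat{x})$ separately yields
\begin{equation*}
-(1+z)\ln(1+\hat{x}) \;=\; (1-z)\ln(1-\hat{x}),
\end{equation*}
i.e.\ $(1+\hat{x})^{\,1+z}(1-\hat{x})^{\,1-z}=1$, which upon solving for $1+\hat{x}$ gives the fixed-point equation $\hat{x} = (1-\hat{x})^{(z-1)/(z+1)}-1$ claimed in the lemma. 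Feeding the resulting $x(z)$ back into the expression for $\gamma$ obtained at the start yields the stated $\gamma(z)$.

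The conceptually routine but error-prone step is the last one: the cancellation that turns $(z-\hat{x})\textsf{h}'(\hat{x}) = 1-\textsf{h}(\hat{x})$ into a pure log identity. The coefficients of $\ln(1-\hat{x})$ and $\ln(1+\hat{x})$ must combine cleanly, and this is where a bookkeeping slip is most likely; I would carry it out by expanding $\textsf{h}(\hat{x})$ using $\log_2\tfrac{1\pm\hat{x}}{2}=\log_2(1\pm\hat{x})-1$ so that all constant $\ln 2$ terms explicitly cancel before regrouping.
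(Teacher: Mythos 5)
Your proposal is correct and follows essentially the same route as the paper: the paper forms the Lagrangian $\gamma(1-z)+(1-\gamma)(1-\hat{x})-\eta(\gamma+(1-\gamma)\textsf{h}(\hat{x})-1+c)$, whose stationarity conditions give exactly your $\eta=(\hat{x}-z)/(1-\textsf{h}(\hat{x}))=-1/\textsf{h}'(\hat{x})$, i.e.\ $(z-\hat{x})\textsf{h}'(\hat{x})=1-\textsf{h}(\hat{x})$, and then reduces this to $(1+z)\log_2(1+\hat{x})+(1-z)\log_2(1-\hat{x})=0$ and the stated fixed point. Your direct-substitution variant yields the identical first-order condition, and your log bookkeeping checks out.
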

\begin{proof}
See Appendix \ref{app-opt} for a complete proof.
\end{proof}

The next theorem characterizes the exact formula of $\underline{\Lambda}(z)$.
\begin{theorem}\label{theorem-underline-Lambda}
Consider the channel family $\{\textrm{BMS}(c)\}$. Then,
\begin{equation*}
\underline{\Lambda}(z) = \begin{cases}
2\epsilon_{\textsf{bsc}}& \text{if } 0 \leq z < z_{\textsf{bsc}},\\
\cfrac{1-c-\textsf{h}(x(z))}{1-\textsf{h}(x(z))}(1-z) \\
\quad\quad+ \cfrac{c}{1-\textsf{h}(x(z))}(1-x(z)) & \text{if } z_{\textsf{bsc}} \leq z < 1,\\
0 & \text{if } z=1,
\end{cases}
\end{equation*}
where $x(z)$ is the solution of $x = (1-x)^{\frac{z-1}{z+1}} - 1$ and
\begin{equation}\label{equ-z-bsc}
z_{\textsf{bsc}} \triangleq \cfrac{\log_2\left(4\epsilon_{\textsf{bsc}}\bar{\epsilon}_{\textsf{bsc}}\right)}{\log_2\left(\epsilon_{\textsf{bsc}} / \bar{\epsilon}_{\textsf{bsc}}\right)}.
\end{equation}
\end{theorem}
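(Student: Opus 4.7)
My plan is to realize $\underline{\Lambda}(z)$ as the pointwise minimum of the three candidate values produced by the case enumeration preceding Lemma \ref{lemma-optimization}, and to stitch the resulting piecewise description together.

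First, I would evaluate each of the three candidates. Cases (i) and (iii) are already explicit: $\Lambda_{\textsf{bsc}}(z) = 2\epsilon_{\textsf{bsc}}$ on $[0,1-2\epsilon_{\textsf{bsc}})$, and $\Lambda_{\textsf{bec}}(z) = (1-c)(1-z)$. For Case (ii) I would substitute the optimizers from Lemma \ref{lemma-optimization} into the objective: with $\gamma(z) = (1-c-\textsf{h}(x(z)))/(1-\textsf{h}(x(z)))$, the entropy constraint immediately yields $1-\gamma(z) = c/(1-\textsf{h}(x(z)))$, so $\Lambda_{\textsf{opt}}(z) = \gamma(z)(1-z) + (1-\gamma(z))(1-x(z))$ collapses to exactly the middle branch of the theorem.

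Next, I would locate the regime boundary. By Lemma \ref{lemma-min-pos}, Case (ii) requires $\hat{x} = x(z) \in [1-2\epsilon_{\textsf{bsc}},1)$; when the interior fixed point violates this constraint, the constrained minimum is pushed to the boundary $\hat{x} = 1-2\epsilon_{\textsf{bsc}}$, $\gamma = 0$, which recovers the BSC exactly. The critical $z$ at which $x(z) = 1-2\epsilon_{\textsf{bsc}}$ is obtained by substitution into the fixed-point equation $x+1 = (1-x)^{(z-1)/(z+1)}$, giving $2\bar{\epsilon}_{\textsf{bsc}} = (2\epsilon_{\textsf{bsc}})^{(z-1)/(z+1)}$. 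Taking $\log_2$ of both sides, solving the resulting linear equation in $z$, and simplifying the ratio yields $z_{\textsf{bsc}} = \log_2(4\epsilon_{\textsf{bsc}}\bar{\epsilon}_{\textsf{bsc}})/\log_2(\epsilon_{\textsf{bsc}}/\bar{\epsilon}_{\textsf{bsc}})$. For $z < z_{\textsf{bsc}}$, the infeasible interior optimum forces the answer to be the boundary BSC value $2\epsilon_{\textsf{bsc}}$.

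Finally, I would verify that Case (iii) is always dominated: the BEC is the degenerate limit $\hat{x} \to 1$ of the two-mass channel, so the unique interior stationary point identified in Lemma \ref{lemma-optimization} beats it whenever feasible, which covers all of $[z_{\textsf{bsc}},1)$. At $z = z_{\textsf{bsc}}$ the BSC and interior Case (ii) expressions agree (since there $\gamma(z) = 0$), ensuring continuity of the piecewise formula, and the trivial $\underline{\Lambda}(1) = 0$ handles the endpoint. The main technical obstacle I anticipate is the algebra extracting $z_{\textsf{bsc}}$ from the transcendental fixed-point equation and, more delicately, a clean monotonicity argument for $x(z)$ combined with a check that the interior critical point produced by Lemma \ref{lemma-optimization} is the \emph{global} (not merely local) minimizer of the constrained problem, so that the three-way comparison really does partition $[0,1]$ in the advertised way.
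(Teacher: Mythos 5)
Your overall route is the same as the paper's: enumerate the three candidates from Cases (i)--(iii), substitute the optimizers of Lemma \ref{lemma-optimization} into the objective to obtain the middle branch, locate $z_{\textsf{bsc}}$ by setting $x(z)=1-2\epsilon_{\textsf{bsc}}$ in the fixed-point relation, and take the pointwise minimum; your algebra for the middle branch and for $z_{\textsf{bsc}}$ matches the paper exactly. The one genuine gap is the three-way comparison on $[z_{\textsf{bsc}},1)$, which you flag as an ``anticipated obstacle'' but do not resolve: the assertion that ``the unique interior stationary point beats [the BEC] whenever feasible'' is not an argument, since a stationary point of a constrained problem need not dominate the boundary of the feasible set --- establishing that it does here is precisely the substance of the comparison. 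The paper closes this with a slope argument: $\Lambda_{\textsf{opt}}$ is concave and non-increasing, its slope at $z_{\textsf{bsc}}$ is $0$ (where $\gamma(z_{\textsf{bsc}})=0$ and it meets $\Lambda_{\textsf{bsc}}=2\epsilon_{\textsf{bsc}}$), and as $z\to 1$ its slope tends to $-(1-c)$, the constant slope of $\Lambda_{\textsf{bec}}$; since both functions vanish at $z=1$ and concavity forces the slope of $\Lambda_{\textsf{opt}}$ to be everywhere at least $-(1-c)$, integrating back from $z=1$ yields $\Lambda_{\textsf{opt}}(z)\le (1-c)(1-z)=\Lambda_{\textsf{bec}}(z)$, while monotonicity from the common value at $z_{\textsf{bsc}}$ yields $\Lambda_{\textsf{opt}}\le\Lambda_{\textsf{bsc}}$. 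To complete your version you would need to supply an equivalent argument (for instance, monotonicity or unimodality of the objective along the one-dimensional constraint curve in $\hat{x}$), both to justify that the interior critical point is the global constrained minimizer and to show the BEC endpoint is dominated; as written, the proposal is the right strategy with exactly that step of the paper's proof missing.
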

\begin{proof}
First, the equation $x(z) = (1-x(z))^{\frac{z-1}{z+1}} - 1$ for $z \in [0,1)$ is equivalent to
\begin{equation}\label{equ-zx}
z(x) = \cfrac{\log_2(1-x) + \log_2(1+x)}{\log_2(1-x) - \log_2(1+x)},
\end{equation}
where $\lim_{x \to 1} z(x) = 1$. Plugging $x=1-2\epsilon_{\textsf{bsc}}$ into (\ref{equ-zx}) yields (\ref{equ-z-bsc}). It is then clear that, for $z \in [0,z_{\textsf{bsc}})$,
\begin{equation*}
\underline{\Lambda}(z) = \min \{\Lambda_{\textsf{bsc}}(z), \Lambda_{\textsf{bec}}(z)\} = 2\epsilon_{\textsf{bsc}}.
\end{equation*}
Note that, for $z=z_{\textsf{bsc}}$, we have $\Lambda_{\textsf{bsc}}(z) = \Lambda_{\textsf{opt}}(z)$. For $z \in [z_{\textsf{bsc}},1)$, the monotonically non-increasing property of the $\Lambda$ function shows that $\Lambda_{\textsf{opt}}(z) \leq \Lambda_{\textsf{bsc}}(z)$. Moreover, when $x(z) \to 1$ as $z \to 1$, the continuity of $\textsf{h}(\cdot)$ shows that, the negative of the slope of $\Lambda_{\textsf{opt}}(z)$ is equal to
\begin{equation*}
\lim_{x \to 1} \cfrac{1-c-\textsf{h}(x(z))}{1-\textsf{h}(x(z))} = 1-c,
\end{equation*}
which is equal to the negative of the slope of $\Lambda_{\textsf{bec}}(z)$. Since the slope of $\Lambda_{\textsf{opt}}(z)$ at $z_{\textsf{bsc}}$ is 0 and $\Lambda_{\textsf{bec}}(z)$ has a constant slope, it follows from the convex-$\cap$ property of the $\Lambda$ function that $\Lambda_{\textsf{opt}}(z) \leq \Lambda_{\textsf{bec}}(z)$, for $z \in [z_{\textsf{bsc}},1)$. Consequently, when $z \in [z_{\textsf{bsc}},1)$, we have
\begin{align*}
\underline{\Lambda}(z) &= \gamma(z) (1-z) + \left(1-\gamma(z)\right) (1-x(z)) \\
&= \cfrac{1-c-\textsf{h}(x(z))}{1-\textsf{h}(x(z))}(1-z) + \cfrac{c}{1-\textsf{h}(x(z))}(1-x(z)).
\end{align*}
Trivially, when $z=1$, $\underline{\Lambda}(z) = 0$, completing the proof.
\end{proof}
Now denote by $\Lambda_{*}(z)$ the $\Lambda$ function of the least upgraded channel with respect to the whole channel family $\{\textrm{BMS}(c)\}$. Then, it is not difficult to see the following.
\begin{corollary}
$\Lambda_{*}(z) = \underline{\Lambda}(z)$, for $z \in [0,1]$.
\end{corollary}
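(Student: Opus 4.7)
The strategy is to show that $\underline{\Lambda}(z)$ is itself the $\Lambda$-function of some valid BMS channel $u$, and then invoke Lemma \ref{lemma-degrad}(iii) to verify that $u$ meets the defining properties of the least upgraded channel stated in the second Property.

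First I would check that $\underline{\Lambda}$ as given in Theorem \ref{theorem-underline-Lambda} has all the properties of a $\Lambda$-function: non-negative, non-increasing, convex-$\cap$ on $[0,1]$, and zero at $z=1$. Non-negativity and $\underline{\Lambda}(1)=0$ are immediate from the formula. For the remaining two properties I would apply the envelope theorem to the optimization in Lemma \ref{lemma-optimization}: eliminating the constraint gives $\underline{\Lambda}(z) = \min_x f(z,x)$ with $f(z,x) = \gamma(x)(1-z) + (1-\gamma(x))(1-x)$ and $\gamma(x) = (1-c-\textsf{h}(x))/(1-\textsf{h}(x))$, and differentiation yields $-\underline{\Lambda}'(z) = \gamma(x(z))$ on $(z_{\textsf{bsc}},1)$. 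Since $x(z)$ is monotonically increasing and $\gamma(x) = 1 - c/(1-\textsf{h}(x))$ increases as $\textsf{h}(x)$ decreases, the slope $-\underline{\Lambda}'$ is non-negative and non-decreasing, matching the slope $0$ of the constant piece at the junction $z_{\textsf{bsc}}$ (a fact already noted in the proof of Theorem \ref{theorem-underline-Lambda}). Hence $\underline{\Lambda}$ is non-increasing and concave on all of $[0,1]$, so it is the $\Lambda$-function of a BMS channel $u$ whose $|D|$-CDF is exactly $-\underline{\Lambda}'(z)$.

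With $u$ in hand, the argument closes quickly. By construction $\Lambda_u(z) = \underline{\Lambda}(z) \leq \Lambda_a(z)$ for all $a \in \{\textrm{BMS}(c)\}$ and all $z$, so Lemma \ref{lemma-degrad}(iii) gives that $u$ is upgraded with respect to every member of the family. Conversely, any channel $v$ upgraded with respect to all members of the family must satisfy $\Lambda_v(z) \leq \Lambda_a(z)$ for every $a$ and $z$, hence $\Lambda_v(z) \leq \min_a \Lambda_a(z) = \underline{\Lambda}(z) = \Lambda_u(z)$, so $u$ is degraded with respect to $v$. These two properties are precisely the uniqueness characterization of the least upgraded channel, and therefore $\Lambda_*(z) = \Lambda_u(z) = \underline{\Lambda}(z)$.

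The main obstacle I foresee is the concavity step: the pointwise infimum of concave functions is not concave in general, so one cannot appeal to any soft principle. The explicit envelope-theorem calculation together with the matching of slopes at $z_{\textsf{bsc}}$ is what makes the argument go through; once concavity is in hand, the rest is a direct translation of the definition of ``least upgraded'' via Lemma \ref{lemma-degrad}(iii).
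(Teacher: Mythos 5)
Your argument is correct and reaches the right conclusion, and since the paper offers no proof of this corollary (it is stated as ``not difficult to see''), your write-up is a legitimate filling-in of the omitted reasoning: verify that $\underline{\Lambda}$ is itself a valid $\Lambda$-function, then translate the definition of ``least upgraded'' through Lemma~\ref{lemma-degrad}(iii), with the degradation directions handled correctly. However, the ``main obstacle'' you identify at the end is stated backwards: the pointwise \emph{infimum} of concave functions \emph{is} concave (for any $z_1,z_2,\lambda$, evaluate whichever $\Lambda_a$ attains, or nearly attains, the infimum at $\lambda z_1+(1-\lambda)z_2$ and use its own concavity); it is the pointwise \emph{supremum} that fails to be concave. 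This soft principle is exactly why $\Lambda_{*}=\underline{\Lambda}$ holds with no further work, and it explains the asymmetry with the degraded case, where $\overline{\Lambda}$ is a pointwise maximum and genuinely requires the convex-$\cap$ envelope to produce $\Lambda^{*}$. Your explicit envelope-theorem computation of $-\underline{\Lambda}'(z)=\gamma(x(z))$ and the slope matching at $z_{\textsf{bsc}}$ is a valid (and more informative) way to confirm concavity and monotonicity, and it additionally identifies the $|D|$-distribution of the extremal channel, but it is not needed to close the logical gap you thought existed. One further point worth making explicit if you polish this: you should also check that $-\underline{\Lambda}'$ takes values in $[0,1]$ so that it really is a sub-distribution function (it does, since $0\le\gamma(x(z))\le 1-c$ on $[z_{\textsf{bsc}},1)$), with the residual mass placed at $x=1$.
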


Denote by $\mathrm{C}_{*}$ the capacity of the least upgraded channel. Expressing $z$ in terms of $x$, inserting the formula for $\Lambda_{*}(z(x))$ into (\ref{equ-H-alt}), and integrating over $x$ gives $\mathrm{C}_{*}$. However, there is no simple formula in the closed form for $\mathrm{C}_{*}$. Instead, we compute it numerically in Section \ref{sec-simul}.

\section{Simulation Results}\label{sec-simul}
In this section, we provide several simulation results regarding channel degradation and upgradation.
From Fig.~\ref{fig-lambda}, one can see that the difference between $\overline{\Lambda}(z)$ and $\Lambda^{*}(z)$ is on the interval $[0,1-2\epsilon_{\textsf{bsc}})$. On this region, $\overline{\Lambda}(z)$ is convex-$\cup$ while $\Lambda^{*}(z)$ is linear and thus convex-$\cap$.
\begin{figure}[!htb]
\centering
\subfigure{\includegraphics[width=4cm]{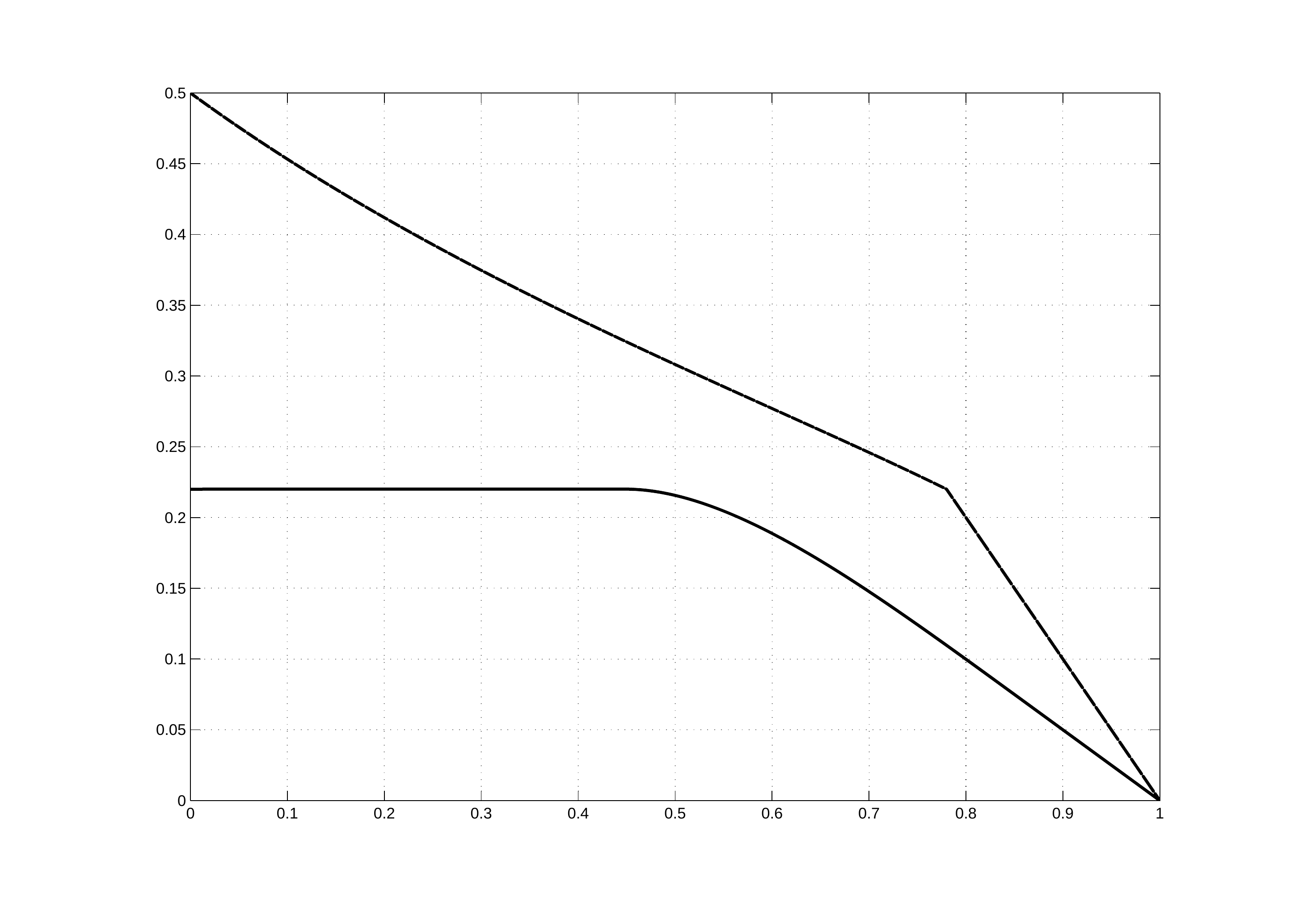}}
\subfigure{\includegraphics[width=4cm]{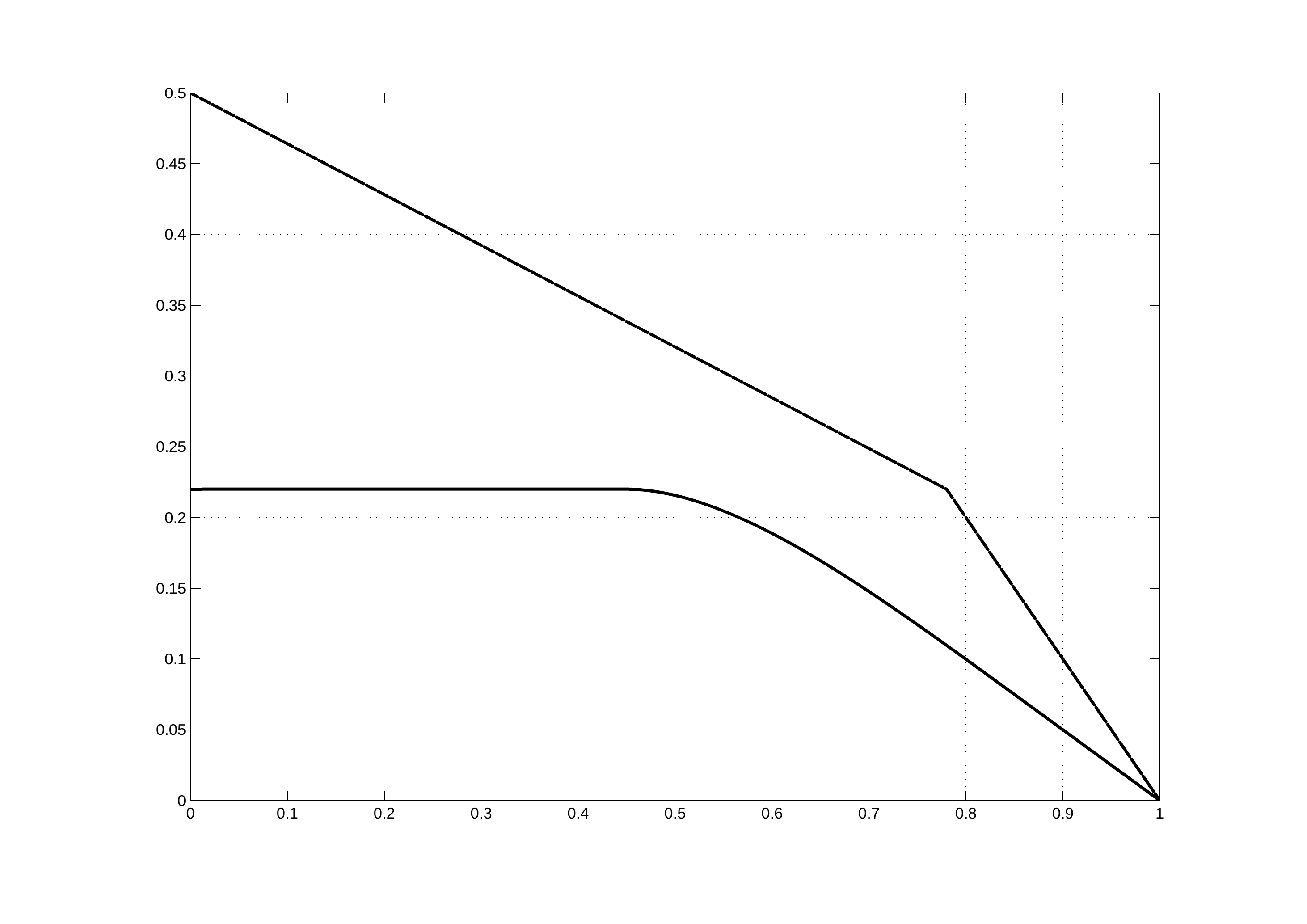}}
\caption{Left: The point-wise maximum $\overline{\Lambda}(z)$ (dashed line) and the point-wise minimum $\underline{\Lambda}(z)$ (solid line) of $\Lambda_a(z)$, for all $a \in \{\textrm{BMS}(0.5)\}$. Right: The $\Lambda$ functions of the least degraded channel $\Lambda^{*}(z)$ (dashed line) and the least upgraded channel $\Lambda_{*}(z)$ (solid line), respectively.}
\label{fig-lambda}
\end{figure}

Theorem \ref{theorem-overline-Lambda} and Theorem
\ref{theorem-underline-Lambda} suggest that channels with very few
probability masses always achieve the point-wise maximum or point-wise
minimum. Thus, we randomly generate $5,000$ BMS channels of
capacity $0.5$ having $2$ probability masses, and $5,000$ BMS channels of
capacity $0.5$ having $3$ probability masses.
Fig.~\ref{fig-channel-random} depicts the simulation results.
\begin{figure}[!htb]
\centering
\includegraphics[width=7cm]{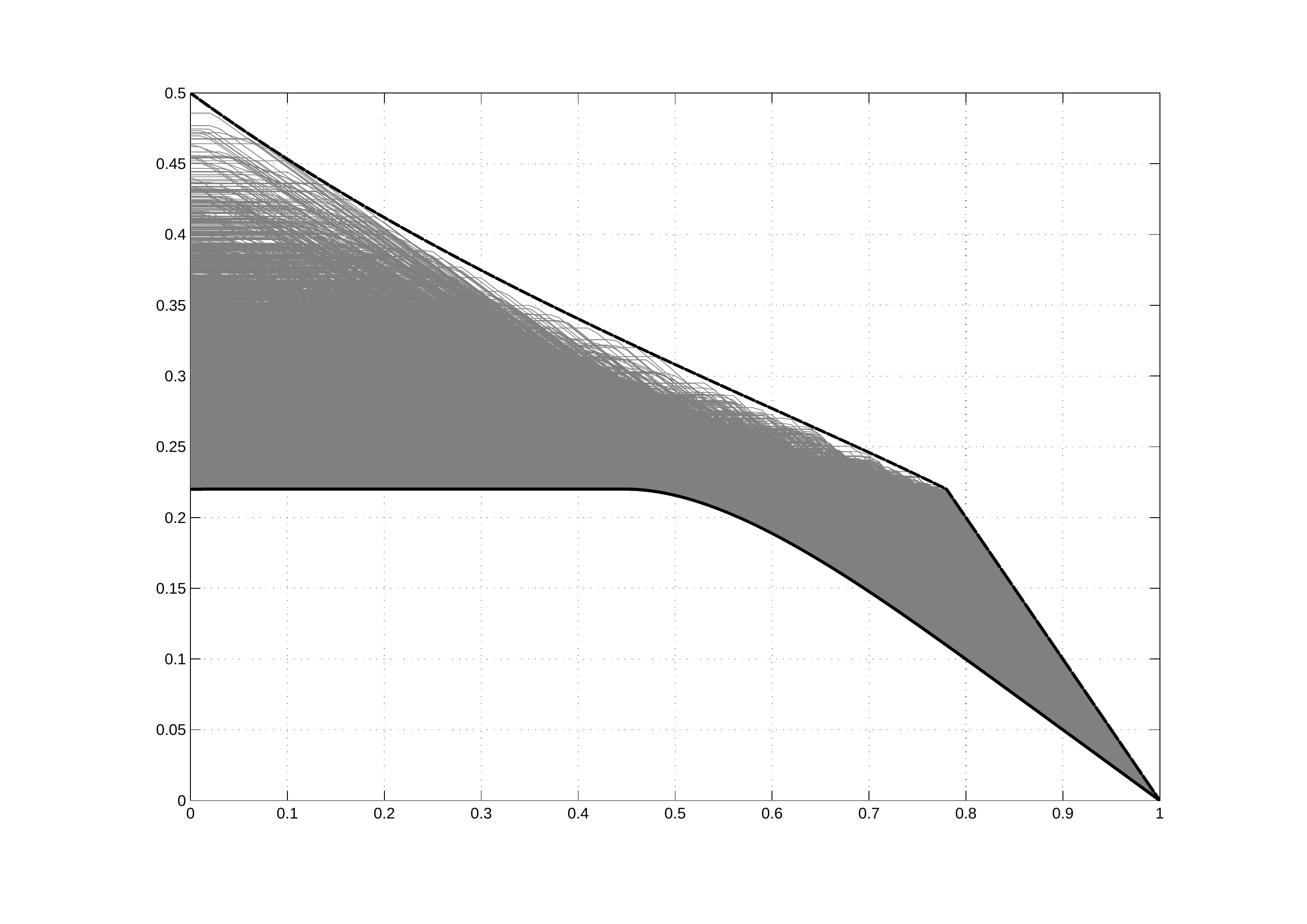}
\caption{The $\Lambda$ functions of randomly generated BMS channels (grey lines) from $\{\textrm{BMS}(0.5)\}$, and the point-wise maximum $\overline{\Lambda}(z)$ (dashed line) and minimum $\underline{\Lambda}(z)$ (solid line) of $\Lambda_a(z)$, for all $a \in \{\textrm{BMS}(0.5)\}$.}
\label{fig-channel-random}
\end{figure}

Other interesting quantities are
the gap between $c$ and the capacity of the least degraded channel, call it $d^{\textrm{gap}}$,
and the gap between the capacity of the least upgraded channel and $c$, call it $u^{\textrm{gap}}$.
See Table \ref{table-capacity-gap} for details.
\begin{table}[!htb]
\centering
\begin{tabular}{ccc}
$c$ & $d^{\textrm{gap}}$ & $u^{\textrm{gap}}$ \\
\toprule
$0.1$ & $0.0728$ & $0.0686$ \\
$0.2$ & $0.1222$ & $0.1009$ \\
$0.3$ & $0.1552$ & $0.1185$ \\
$0.4$ & $0.1739$ & $0.1257$ \\
$0.5$ & $0.1795$ & $0.1243$ \\
$0.6$ & $0.1721$ & $0.1155$ \\
$0.7$ & $0.1516$ & $0.0995$ \\
$0.8$ & $0.1175$ & $0.0762$ \\
$0.9$ & $0.0684$ & $0.0446$ \\
\end{tabular}
\caption{The gap between $c$ and $\mathrm{C}^{*}$ is $d^{\textrm{gap}} = c - \mathrm{C}^{*}$, and the gap between $\mathrm{C}_{*}$ and $c$ is $u^{\textrm{gap}} = \mathrm{C}_{*} - c$.}
\label{table-capacity-gap}
\end{table}

Fig.~\ref{fig-channel-capacity} depicts the capacity $\mathrm{C}^{*}$ ($\mathrm{C}_{*}$, respectively)
of the least degraded channel (resp. the least upgraded channel)
as a function of $c$, where $c$ ranges from $0.001$ to $0.999$ with a step size of $0.001$.
We then compute the maximum value of $d^{\textrm{gap}}$ being approximately $0.1795$ which corresponds to $c=0.4940$;
while the maximum value of $u^{\textrm{gap}}$ is approximately $0.1261$ which corresponds to $c=0.4310$.
\begin{figure}[!htb]
\centering
\includegraphics[width=7cm]{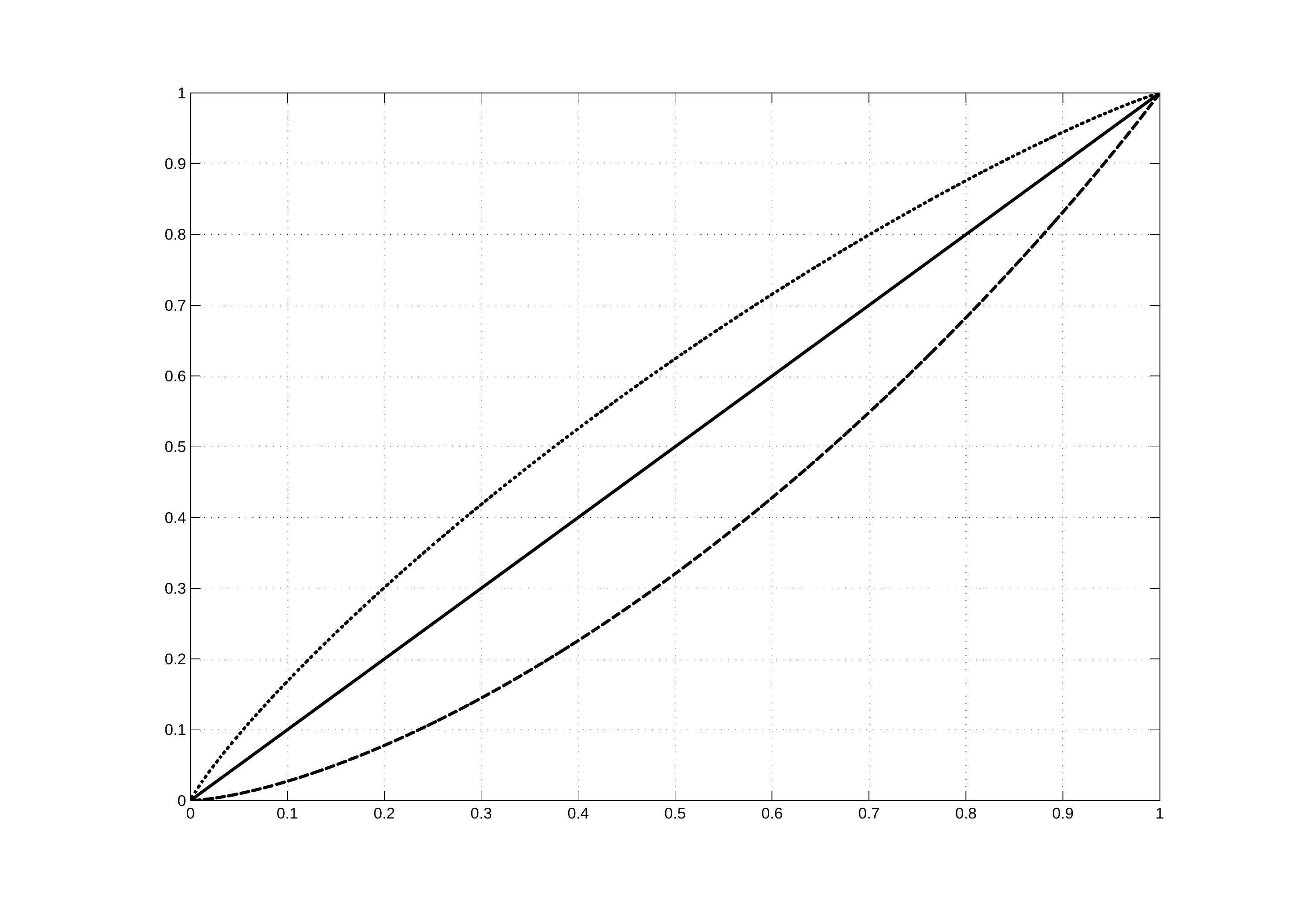}
\caption{The channel capacity of the least degraded channel (dashed line) as a function of the underlying channel capacity $c$, $0 < c < 1$; the channel capacity of the least upgraded channel (dotted line) as a function of the underlying channel capacity $c$, and the underlying channel capacity (solid line) as a function of itself.}
\label{fig-channel-capacity}
\end{figure}

\section{Conclusion and Open Problems}
The least degraded channel and the least upgraded channel with
respect to the family of BMS channels of fixed capacity were
introduced in this paper. Also, their characterizations and capacity
formulae were derived. An interesting open question is to consider
other families of channels. E.g., consider the family of channels
which are the result of taken the family of channels of fixed
capacity $c$ and performing one polarization step to them.  We
reserve such questions for future work.

\section*{Acknowledgement}
This work was supported by grant No. 200021-125347 of the Swiss National Foundation.

\appendix
\subsection{Proof of Lemma \ref{lemma-optimization}}\label{app-opt}
\begin{proof}
Define $f(\gamma,\hat{x},\eta) \triangleq \gamma(1-z) + (1-\gamma)(1-\hat{x}) - \eta \left(\gamma + (1-\gamma)\textsf{h}(\hat{x}) - 1 + c\right)$. Then, taking the first-order derivatives with respect to $\hat{x}$, $\gamma$, and $\eta$ yields
\begin{align}
\label{equ-para-x}\cfrac{\partial}{\partial\hat{x}}f(\gamma,\hat{x},\eta) &= \gamma - 1 - \eta(1-\gamma)\textsf{h}'(\hat{x}),\\
\cfrac{\partial}{\partial\gamma}f(\gamma,\hat{x},\eta) &= 1-z-(1-x)-\eta+\eta \textsf{h}(\hat{x}),\\
\cfrac{\partial}{\partial\eta}f(\gamma,\hat{x},\eta) &= \gamma + (1-\gamma)\textsf{h}(\hat{x}) - 1 + c.
\end{align}
Now setting the first-order derivatives to be 0 gives
\begin{equation}\label{equ-para-gamma}
\gamma = \frac{1-c-\textsf{h}(\hat{x})}{1-\textsf{h}(\hat{x})} \quad \text{and} \quad \eta = \frac{\hat{x}-z}{1-\textsf{h}(\hat{x})}.
\end{equation}
Plugging (\ref{equ-para-gamma}) into (\ref{equ-para-x}) and making some manipulations give
\begin{equation*}
(1+z)\log_2(1+\hat{x}) + (1-z) \log_2(1-\hat{x}) = 0,
\end{equation*}
which is equivalent to
\begin{equation*}
\hat{x} = (1-\hat{x})^{\frac{z-1}{z+1}} - 1.
\end{equation*}
\end{proof}

\end{document}